\newtheorem{theorem}{Theorem}
\title{Quantum Autoencoder: An efficient approach to quantum feature map generation}
\author{
\textbf{Shengxin Zhuang}$^{1,6,\dagger}$,
\textbf{Yusen Wu}$^{2,\dagger}$,
\textbf{Xavier F.~Cadet}$^{3}$,
\textbf{Du Q.~Huynh}$^{4}$,
\textbf{Wei Liu}$^{4}$,
\textbf{Philippe Charton}$^{6,7}$,
\textbf{Cedric Damour}$^{8}$,
\textbf{Frédéric Cadet}$^{5,6,7,\S}$,
\textbf{Jingbo Wang}$^{1,\ddagger}$\\
$1$ Department of Physics, The University of Western Australia, Perth, WA, Australia \\
$2$ School of Artificial Intelligence, Beijing Normal University, Beijing, China \\
$3$ Department of Computing, Imperial College London, United Kingdom \\
$4$ Department of Computer Science and Software Engineering, The University of Western Australia, Perth, WA, Australia \\
$5$ PEACCEL, Artificial Intelligence Department, AI for Biologics, France \\
$6$ University of Paris City \& University of Reunion, France \\
$7$ Laboratory of Excellence GR-Ex, France \\
$8$ EnergyLab, EA 4079, Faculty of Sciences and Technology, University of Reunion, France \\
$^\dagger$These authors contributed equally. \\
$^\ddagger$ \texttt{jingbo.wang@uwa.edu.au} \\
$^\S$\texttt{frederic.cadet.run@gmail.com}
}
\begin{document}

\maketitle

\begin{abstract}
Quantum machine learning methods often rely on fixed, hand-crafted quantum encodings that may not capture optimal features for downstream tasks. In this work, we study the power of quantum autoencoders in learning data-driven quantum representations. We first theoretically demonstrate that the quantum autoencoder method is efficient in terms of sample complexity throughout the entire training process. Then we numerically train the quantum autoencoder on 3 million peptide sequences, and evaluate their effectiveness across multiple peptide classification problems including antihypertensive peptide prediction, blood-brain barrier-penetration, and cytotoxic activity detection. The learned representations were compared against Hamiltonian-evolved baselines using a quantum kernel with support vector machines. Results show that quantum autoencoder learned representations achieve accuracy improvements ranging from 0.4\% to 8.1\% over Hamiltonian baselines across seven datasets, demonstrating effective generalization to diverse downstream datasets with pre-training enabling effective transfer learning without task-specific fine-tuning. This work establishes that quantum autoencoder architectures can effectively learn from large-scale datasets (3 million samples) with compact parameterizations ($\sim$900 parameters), demonstrating their viability for practical quantum applications.
\end{abstract}

% \noindent\textbf{Keywords:} quantum machine learning, quantum autoencoders, representation learning, quantum unsupervised learning, transfer learning, peptide classification

\section{Introduction}
Quantum machine learning (QML) has emerged as a promising field that integrates principles of quantum computing with classical machine learning objectives, offering the potential for advantage expressivity and pattern recognition for high-dimensional learning tasks~\cite{biamonte2017quantum}. Central to QML model performance is the encoding of classical data into quantum states, a critical preprocessing step that fundamentally determines the model's representational capacity and generalization ability~\cite{huang2021power,schuld2021effect}. Whether in quantum support vector machines (QSVMs)~\cite{rebentrost2014quantum,wu2023quantum,havlivcek2019supervised}, variational quantum classifiers (VQCs)~\cite{cerezo2021variational}, or quantum generative models~\cite{dallaire2018quantum,carrasquilla2019reconstructing,gao2018quantum}, the choice of quantum state representation is often a key bottleneck that determines downstream performance.

In practice, most QML pipelines rely on fixed, task-agnostic encodings to convert classical data into quantum states. Common strategies include \textit{angle encoding}~\cite{schuld2019quantum}, where scalar features are mapped to qubit rotations; \textit{amplitude encoding}~\cite{mottonen2004transformation}, which encodes normalized feature vectors into the amplitude of quantum states; and \textit{Hamiltonian evolution}~\cite{wecker2015progress,wiersema2020exploring}, where classical data is used to parametrize a system Hamiltonian and the resulting time-evolved quantum state serves as the data representations. These approaches are appealing due to their computational simplicity and circuit efficiency. However, they suffer from a key limitation: they are static and unable to adapt to the statistical structure of specific datasets or optimize for the downstream learning task. In contrast, classical machine learning has achieved major performance gains through learned representations enabled by deep learning architectures, including unsupervised pre-training models like protein language models (PLMs) that are specifically adapted to both the data distribution and downstream tasks~\cite{gelman2025biophysics}. The quantum field has yet to fully embrace this data-driven paradigm. Most current quantum encodings remain manually designed and disconnected from the structure of the input data or the requirements of the learning task, potentially limiting the effectiveness of QML models in practice.

This contrast has prompted interest in quantum analogs of representation learning. Among these, the quantum autoencoder has been proposed as a variational circuit architecture that learns to compress quantum states by discarding redundant qubits. While quantum autoencoders were originally developed for tasks such as quantum state compression, they offer an appealing framework for learning quantum representations for classical data. Despite this, their role in supervised QML pipelines, especially in learning general-purpose representations for downstream classification, remains underexplored. Moreover, few works have investigated the scalability of quantum autoencoder training on large classical datasets encoded into quantum states.

% In this paper, we examine the effectiveness of quantum autoencoders as a representation learning tool in QML, focusing on biologically relevant data. Specifically, we apply quantum autoencoders to a large dataset of over 3 million quantum states encoded from peptide sequences and evaluate the learned representations across seven diverse downstream classification tasks in computational biology. Our goal is to assess whether quantum autoencoder models, when trained on large unlabeled datasets, can produce generalizable quantum representations that outperform traditional quantum encodings in supervised settings. We emphasize that this work does not claim quantum advantage over classical methods. Instead, we focus on whether learned quantum representations can improve upon fixed quantum encodings within the QML framework, contributing to our understandings of representation learning in quantum world.
% In this work, our primary objective is not to demonstrate quantum advantage over classical approaches, but to establish that quantum autoencoders can learn useful representations that outperform fixed Hamiltonian encodings. 
In this work, our primary objective is not to demonstrate quantum advantage over classical approaches but to establish that quantum autoencoders can learn useful representations that provide measurable improvements in supervised peptide classification tasks compared to fixed Hamiltonian encodings. This focus isolates the contribution of representation learning within quantum models themselves, providing an explicit test of whether unsupervised quantum pre-training offers benefits beyond static, hand-crafted encodings. To this end, we pre-trained quantum autoencoders with a large dataset of over 3 million peptide sequences and evaluated the learned representations across seven datasets in computational biology. Comparisons to classical PLMs are included only as a separate contextual benchmark, where PLMs are evaluated against one-hot encodings using classical classifiers. We do not compare quantum autoencoders directly against PLMs or other classical methods; rather, the central contribution of this work is the demonstration that quantum autoencoders improve over Hamiltonian baselines within the quantum settings. This work also establishes a methodology for unsupervised quantum representation learning that may generalize to other QML applications.

We train quantum autoencoders with different compression schemes and circuit depths, then extract the learned compressed quantum states as fixed representations for each input from the downstream datasets. These representations are evaluated using support vector machines (SVMs)~\cite{cortes1995support} with trace distance kernels. As a baseline, we compare against Hamiltonian evolved quantum states encoded from the same downstream datasets. This setup allows us to isolate the impact of the learned representation on classification performance while keeping the downstream classifier and data consistent.

Our results show that quantum autoencoder learned representations consistently outperform Hamiltonian-based encodings by 0.4\% to 8.1\% in test accuracy across all tasks while demonstrating strong transferability: models trained in an unsupervised fashion on large-scale data generalize effectively to diverse downstream tasks without requiring fine-tuning. This establishes quantum autoencoders as effective pre-trained quantum representation learners, establishing a quantum analog to classical representation learning models.

In summary, this work presents an empirical study of quantum autoencoders as a practical component in QML workflows. By applying quantum autoencoders to large-scale peptide datasets and benchmarking their learned representations across several bioinformatics classification tasks, we demonstrate their potential in real-world settings where data structures are complex and hand-crafted encodings may fall short. To the best of our knowledge, the quantum autoencoders presented here are pre-trained on 3 million peptide sequences, representing one of the largest datasets used for representation learning in QML to date. Our findings suggest that quantum autoencoders offer a scalable and effective approach to representation learning, with implications for transfer learning and hybrid quantum-classical pipelines.

\section{Related Work}
\subsection{Quantum Data Encoding}
Encoding classical data into quantum states is a fundamental step in QML pipelines, as it determines the subspace of the Hilbert space explored by the models. Common approaches include:
\begin{itemize}
    \item \textbf{Angle encoding:} 
    For a data vector $\bm x_i = (x_{i,1}, x_{i,2}, \dots, x_{i,n})$, each scalar value $x_{i,j}$ is used to parametrize a single-qubit rotation~\cite{schuld2019quantum}, 
     \[ |0\rangle^{\otimes n} \;\mapsto \; \bigotimes_{j=1}^n R_y(x_{i,j}) \, |0\rangle =\bigotimes_{j=1}^n \left(\cos(x_{ij}/2)|0\rangle+\sin(x_{ij}/2)|1\rangle\right).\]

    \item \textbf{Amplitude encoding:}
    The normalized vector $\bm x_i \in  \mathbb{R}^{2^n}$ is encoded directly into the amplitudes of a quantum state~\cite{mottonen2004transformation},
    \[
    \bm x_i  \;\mapsto\; 
    |\psi(\bm x_i)\rangle = \sum_{j=0}^{2^n-1} x_{i,j} |j\rangle, 
    \quad \|\bm x_i\|_2 = 1 .
    \]

    \item \textbf{Hamiltonian evolution encoding:}
    The vector $\bm x_i$ parametrizes a Hamiltonian $H(\bm x_i)$, and the encoded state is obtained by time evolution~\cite{wecker2015progress,wiersema2020exploring}, 
    \[
    |\psi(\bm x_i)\rangle = e^{-i H(\bm x_i) t} \,|0\rangle^{\otimes n}.
    \]
\end{itemize}
These encodings have been central to early QML experiments because of their simplicity and relatively low circuit depth. In this work, we selected the Hamiltonian evolution encoding as a fixed baseline against which to compare the learned, data-driven quantum representation.

\subsection{Quantum Autoencoders}
Quantum autoencoders are variational quantum circuits designed to efficiently compress quantum states by removing redundant degrees of freedom while retaining the information necessary to reconstruct the original state~\cite{romero2017quantum}. The typical architecture partitions the qubits into a latent register, which retains compressed information, and a trash register, which is discarded after compression. The parameterized unitary is trained to maximize the fidelity between the trash register and a reference state via a SWAP test~\cite{buhrman2001quantum}.

Originally, quantum autoencoders were proposed for efficient compression of quantum data~\cite{romero2017quantum}. Subsequent work has extended quantum autoencoders to denoising quantum data~\cite{bondarenko2020quantum} and experimental demonstrations with photonic systems~\cite{pepper2019experimental}, where data compression has been realized using photons. 

Beyond compression, the quantum variational autoencoder (QVAE)~\cite{khoshaman2018quantum} adapts the autoencoder concept for generative modeling. In the QAVE, the latent distribution is modeled by a quantum Boltzmann machine, enabling both data generation and representation learning. This allows sampling new data from quantum enhanced latent variables, shifting the focus from compressing quantum states to learning quantum enhanced generative processes. 

Other recent work has also explored autoencoder-based quantum architectures in molecular
domains. A quantum variational autoencoder with spherical latent variable learning has been 
applied to 3D molecule generation, where the goal is to learn latent variables that capture 
molecular geometry and enable the synthesis of novel molecular structures~\cite{wu2024qvae}. In parallel, MolQAE~\cite{pan2025molqae} introduced a quantum autoencoder framework for molecular representation learning, focusing on compressing SMILES-based molecular encodings into a latent quantum register. Their evaluation was based on the reconstruction fidelity, demonstrating that compressed states could retain information about the original input. However, the learned representations were not applied to downstream supervised tasks such as property prediction, leaving their practical usefulness for machine learning unexplored. In contrast, our work explicitly evaluates quantum autoencoder representations in peptide classification tasks, benchmarking against Hamiltonian kernels. This direct use of learned representations for prediction differentiates our study, showing not only that the quantum autoencoder can compress quantum states but also that the resulting representations improve downstream performance.

\subsection{Representation Learning for Sequence Data}
The success of machine learning algorithms generally depends on the choice of data representation, which can hide or reveal the different explanatory factors of variation behind the data~\cite{bengio2013representation}. The learned representations have driven major advances in machine learning by automatically learning features that capture richer patterns than manual engineering and enable effective transfer learning. 

The bioinformatics field has begun to adopt this approach in recent years. Early computational approaches to peptide and protein modeling relied on fixed encodings such as amino acid composition or physicochemical descriptors. While useful in certain contexts, these handcrafted features were limited in their ability to capture the complex dependencies that govern biological sequence-function relationships. More recently, PLMs such as ESM, ProtBERT, UniRep~\cite{lin2023evolutionary,elnaggar2021prottrans,alley2019unified}, which are trained in an unsupervised (more precisely, self-supervised) manner on millions of sequences, as well as PeptideBERT, which fine-tunes ProtBERT for peptide property prediction tasks including hemolysis and solubility~\cite{guntuboina2023peptidebert}. These models demonstrate that unsupervised pre-training on large unlabeled datasets can produce general purpose representations that transfer effectively to diverse supervised prediction problems.

Most existing QML literature has focused on demonstrating theoretical quantum advantage or on developing efficient state preparation methods that exploit symmetries or well-defined mathematical structures~\cite{huang2022quantum,huang2021information}, whereas in many real-world problems the underlying structure is unclear or poorly defined. For example, in biological sequence modeling, the relationship between peptide or protein sequence and function is highly complex and remains poorly understood, making it difficult to capture with fixed encodings. This motivates the need for adaptive, data-driven quantum representation learning approaches, such as quantum autoencoders, which can make use of large unlabeled datasets to learn representations that could potentially improve downstream task performance.
\section{Methods}
% \subsection{Overview}
Our workflow consists of two main stages:
\begin{enumerate}[nosep,label=(\alph*)]
\item[(1)] unsupervised pre-training of a quantum autoencoder on Hamiltonian encoded quantum states from 3 million unlabeled peptide sequences; 
\item[(2)] downstream evaluation of the learned representation on seven peptide supervised classification tasks. 
\end{enumerate}
The Hamiltonian encoded states play two roles: (i)~they serve as the input to the quantum autoencoder; (ii)~they are also used to construct a kernel that trains classical SVMs to provide a baseline for comparison. The quantum autoencoder compresses quantum states by discarding trash qubits, with the remaining subsystem serving as the learned representation. These representations are then used to construct a kernel for SVMs classification. 
% This framework allows us to assess whether quantum autoencoder learned representations lead to better performance compared to traditional fixed encodings.
A schematic of the overall workflow is shown in Fig.~\ref{fig:workflow}.

%%%%%%%%%%%%
\begin{figure}[!htb]
\centering
\begin{tikzpicture}[
    box/.style={rectangle, draw, thick, minimum width=2.2cm, minimum height=1.2cm, align=center, font=\scriptsize},
    data/.style={box, fill=neutral!40},
    hamiltonian/.style={box, fill=hamiltonian!40},
    qae/.style={box, fill=qae!40},
    kernel/.style={box, fill=tertiary!40},
    svm/.style={box, fill=secondary!40},
    result/.style={box, fill=improvement!40},
    arrow/.style={->, thick, >=stealth},
    dashed_arrow/.style={->, thick, >=stealth, dashed, qae}
]
% Define custom colors
\definecolor{qae}{HTML}{f599a1}
\definecolor{hamiltonian}{HTML}{5299cc}
\definecolor{improvement}{HTML}{73c79e}
\definecolor{neutral}{HTML}{fcd590}
\definecolor{secondary}{HTML}{95aeda}
\definecolor{tertiary}{HTML}{a577ad}

% Pre-training Phase Box - narrower width, same height as methods
\draw[thick] (-0.5, 4.5) rectangle (8.5, 6.5);
\node[above] at (4, 6.6) {\small\textbf{Pre-training Phase}};

% Pre-training components - aligned with Method A components below
\node[data] (pretrain_data) at (1, 5.5) {3M Peptides\\Training\\Data};
\node[hamiltonian] (pretrain_ham) at (3.8, 5.5) {Hamiltonian\\Evolution};
\node[qae] (qae_training) at (6.6, 5.5) {Quantum\\Autoencoder\\Training};

% Pre-training arrows
\draw[arrow] (pretrain_data) -- (pretrain_ham);
\draw[arrow] (pretrain_ham) -- (qae_training);

% Method A Box - made wider
\draw[thick] (-0.5, 1.5) rectangle (13.5, 3.5);
\node[above] at (6.5, 3.6) {\small\textbf{Method A: Quantum Autoencoder Representation}};

% Method A components - increased spacing to 2.8 units
\node[hamiltonian] (methoda_ham) at (1, 2.5) {Hamiltonian\\Evolution};
\node[qae] (pretrained_qae) at (3.8, 2.5) {Pre-trained\\Quantum\\Autoencoder};
\node[qae] (inference) at (6.6, 2.5) {Inference\\Extract\\Representation};
\node[kernel] (qae_kernel) at (9.4, 2.5) {Quantum Kernel\\Trace Distance};
\node[svm] (qae_svm) at (12.2, 2.5) {SVM\\Classifier};

% Method A arrows
\draw[arrow] (methoda_ham) -- (pretrained_qae);
\draw[arrow] (pretrained_qae) -- (inference);
\draw[arrow] (inference) -- (qae_kernel);
\draw[arrow] (qae_kernel) -- (qae_svm);

% Downstream data - positioned between methods
\node[data] (downstream_data) at (1, 0.5) {Downstream\\Datasets};

% Performance comparison - positioned between methods
\node[result] (performance) at (12.2, 0.5) {Performance\\Comparison\\Metrics};

% Method B Box - made wider and title positioned consistently
\draw[thick] (-0.5, -2.5) rectangle (13.5, -0.5);
\node[above] at (6.5, -0.4) {\small\textbf{Method B: Hamiltonian Representation (Baseline)}};

% Method B components - aligned with Method A spacing
\node[hamiltonian] (methodb_ham) at (1, -1.5) {Hamiltonian\\Evolution};
\node[kernel] (ham_kernel) at (9.4, -1.5) {Quantum Kernel\\Trace Distance};
\node[svm] (ham_svm) at (12.2, -1.5) {SVM\\Classifier};

% Method B arrows
\draw[arrow] (methodb_ham) -- (ham_kernel);
\draw[arrow] (ham_kernel) -- (ham_svm);

% Connecting arrows
\draw[arrow] (downstream_data) -- (methoda_ham);
\draw[arrow] (downstream_data) -- (methodb_ham);
\draw[arrow] (qae_svm) -- (performance);
\draw[arrow] (ham_svm) -- (performance);
\end{tikzpicture}
\caption{Methodology Overview: Quantum Autoencoder vs Hamiltonian Baseline Comparison. The framework consists of three phases: (1) Pre-training on 3M peptide sequences to learn quantum representations, (2) Method A using pre-trained quantum autoencoders for downstream tasks, and (3) Method B using direct Hamiltonian evolution as baselines. Both methods employ quantum trace distance as a kernel for SVM classification, enabling direct performance comparison.}
\label{fig:workflow}
\end{figure}
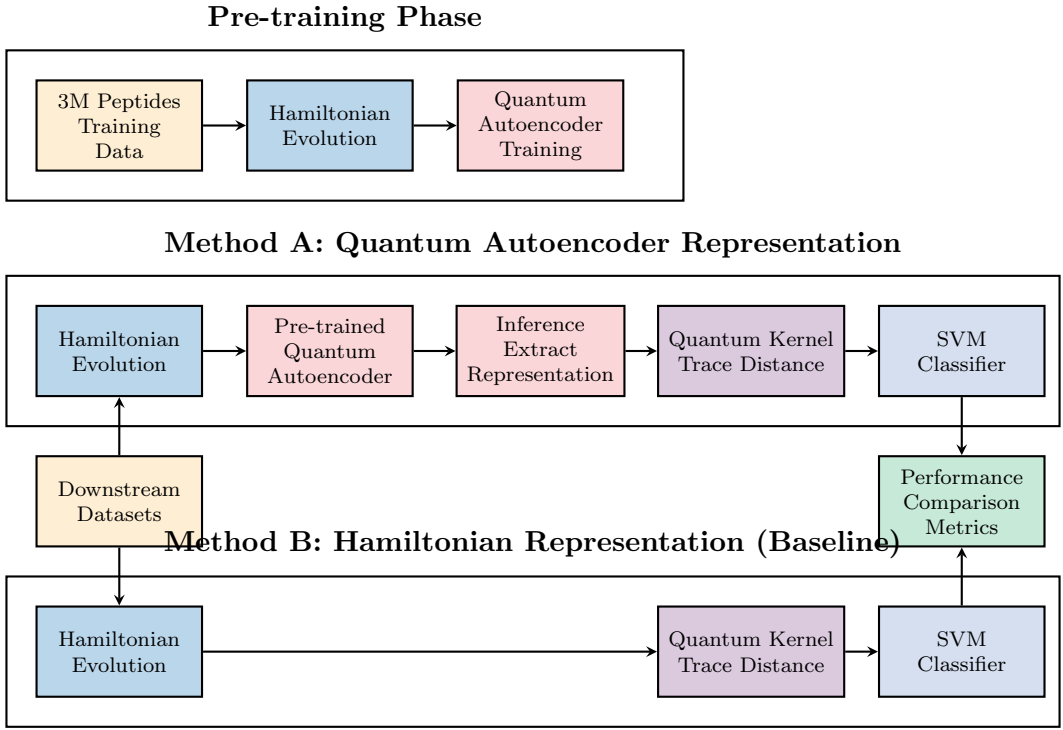

%%%%%%%%%%%%
%%%%%%%%%%%

% \subsection{Quantum State preparation}
% Peptide sequences are first represented as one-hot encoding matrices, with rows corresponding to amino acid type and columns to sequence positions. These matrices are flattened into binary vectors, which serve as parameters or a system Hamiltonian. The resulting Hamiltonian evolution produces time-evolved quantum states that serve as the encoded representations of peptides. These states form the input to the quantum autoencoder and are also used directly to construct baseline kernels for downstream evaluation. 

\subsection{Representation Pipelines}
\label{sec:representation_pipelines}
All experiments used the same peptide datasets, but the initial data representation varied by method. Quantum approaches (the quantum autoencoder and Hamiltonian baseline) and classical baselines started from one-hot encoding matrices, where rows correspond to amino acid type and columns to sequence positions. These matrices were flattened into binary vectors for subsequent processing. On the other hand, PLMs operated directly on raw amino acid sequences, applying their own tokenization before producing representations. 
% The downstream processing differed depending on the approach:
Detailed dataset collection, preparation, and preprocessing steps are provided in Appendix~\ref{app:pretraindata} (pre-training) and Appendix~\ref{app:downstreamdata} (downstream).

\textbf{Hamiltonian encoding baseline.} The flattened one-hot vectors parametrized a system Hamiltonian, 
and the resulting time-evolved states were used directly to construct a kernel for support 
vector machines (SVMs). This serves as the fixed quantum encoding baseline (see Section~\ref{sec:hamiltonian_kernel} for kernel construction details). 
% Specifically, suppose $\mathcal{D}=\{(\bm x_i)\}_{i=1}^N$ representing the flattened one-hot vectors, the Hamiltonian encoding method achieves the map $\bm x_i\mapsto|\psi(\bm x_i)\rangle=e^{-iH(\bm x_i)t}|0^n\rangle$~\cite{zhuang2024non}. Here, we suppose the one-hot vector $\bm x_i\in\mathbb{R}^M$ represents a $M$-dimensional real vector, the involved local Hamiltonian $H(\bm x_i)=\sum_{s=1}^M\bm x_i(s)P_s$, and $P_s$ represents a `constant-weight' Pauli operator non-trivially acting on constant number of qubits. The quantum kernel function is defined as $k(\bm x_i,\bm x_j)=|\langle\psi(\bm x_i)|\psi(\bm x_j)\rangle|^2$, and it can be computed efficiently by using the SWAP-test method~\cite{buhrman2001quantum}.

\textbf{Quantum autoencoder.} The same Hamiltonian-encoded states were provided as input 
to the quantum autoencoder, which compressed them by tracing out designated trash qubits. The remaining subsystem 
was treated as the learned representation. Reduced density matrices from this subsystem were then 
used to construct kernels for SVM classification (see Section~\ref{sec:qae_kernel} for details). 

\textbf{Classical baselines.} The flattened one-hot vectors were used as input to five 
candidate classifiers: SVM with an RBF kernel, SVM with a linear kernel, random forest~\cite{breiman2001random}, 
XGBoost~\cite{chen2016xgboost}, and logistic regression~\cite{mccullagh2019generalized}. For each dataset, the best-performing model was selected 
based on cross-validated accuracy.

\textbf{Protein language models (PLMs).} Raw amino acid sequences were provided as input to pre-trained PLMs, including ESM, ProtBERT, and related models~\cite{lin2023evolutionary,elnaggar2021prottrans}. These models applied their own tokenization and generated sequence representations, which were subsequently evaluated with the same set of five classical classifiers used for the classical baselines. The best performing PLM-classifier combination was reported for each dataset, and the full list of PLMs considered is provided in Appendix~\ref{app:plms}.

This setup enables two types of comparisons. In the quantum setting, we evaluate whether quantum autoencoder representations provide improvements over fixed Hamiltonian encoding baselines. In the classical setting, we evaluate whether pre-trained PLMs provide improvements over traditional one-hot encodings when combined with classical classifiers. These paired comparisons focus on the relative contribution of learned versus fixed representations in each domain, with all classifiers evaluated under $5$-fold cross-validation to ensure fairness and consistency.

\subsection{Quantum Autoencoder }
The quantum autoencoder consists of a parametrized encoder circuit that compresses the input state by tracing out some trash qubits, leaving the remaining subsystem as the learned representation.
%Unlike the quantum autoencoder proposed in~\cite{romero2017quantum}, which includes both an encoder and a decoder and uses SWAP tests to measure fidelity, our implementation can be regarded as a variant that focuses only on training the encoder. 
During training, for each one-hot vector $\bm x_i\in\mathcal{D}$, we first map the classical data $\bm x_i$ to $n$-qubit Hilbert space $|\psi(\bm x_i)\rangle$ by using the Hamiltonian encoding method. Then, a $n$-qubit, $d$-layer variational quantum circuit $U(\vec{\bm\theta})$ is performed on $|\psi(\bm x_i)\rangle$, followed by a computational basis measurement to the $m$ trash qubits, where $m=\mathcal{O}(1)$ and $m\ll n$. The probability of obtaining all $|0\rangle$ outcomes is used as the loss function, that is
\begin{align}
    \mathcal{L}(\vec{\bm\theta})={\rm Tr}\left[\Pi_{m}U(\vec{\bm\theta})\mathbb{E}_{\bm x_i\sim \mathcal{D}}\left(|\psi(\bm x_i)\rangle\langle\psi(\bm x_i)|\right)U^{\dagger}(\vec{\bm\theta})\right],
    \label{Eq:lossfunction}
\end{align}
where the $m$-qubit local observable $\Pi_m=|0\rangle\langle0|^{\otimes m}$. The encoder parameters $\vec{\bm\theta}$ are optimized to maximize the probability $\mathcal{L}(\vec{\bm\theta})$, effectively pushing information away from the trash subsystem and into the remaining subsystem, which then serves as the learned representation. Notably, we only train the encoder in the training process, which is mathematically equivariant to train both encoder and decoder when they share the same quantum circuit. Specifically, let $\bm\theta^*=\arg\max_{\vec{\bm\theta}}\mathcal{L}(\vec{\bm\theta})$, the learned quantum feature state (density matrix) is
\begin{align}
    \phi(\bm x_i)={\rm Tr}_{m}\left[U(\bm\theta^*)|\psi(\bm x_i)\rangle\langle\psi(\bm x_i)|U^{\dagger}(\bm\theta^*)\right].
    \label{Eq:quantumfeature}
\end{align}

In the following, we provide theoretical guarantees for the quantum autoencoder method. Our first result focuses on the quantum circuit depth of $U(\vec{\bm\theta})$ used in the quantum autoencoder.

\begin{theorem}[Informal]
    Given the Hamiltonian encoding method $|\psi(\bm x_i)\rangle=e^{-iH(\bm x_i)t}|0^n\rangle$, where the evolution time $t=\mathcal{O}(1)$, the quantum circuit depth of $U(\vec{\bm\theta})$ satisfies $d\leq {\rm poly}\log(nt/\epsilon)$ sufficing to implement the quantum autoencoder task.
    \label{them:depth}
\end{theorem}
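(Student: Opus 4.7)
The claim is an existence / expressivity statement: for every $\epsilon > 0$, there is a parameterized circuit $U(\vec{\bm\theta})$ of depth $d \leq \text{poly}\log(nt/\epsilon)$ admitting a parameter setting that achieves $\mathcal{L}(\vec{\bm\theta}) \geq 1-\epsilon$. My plan has three ingredients, all exploiting $t = \mathcal{O}(1)$ together with the locality of the data-dependent Hamiltonian $H(\bm x_i)$.

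First, I would establish a structural bound on the encoded ensemble. Under $t = \mathcal{O}(1)$ and local $H(\bm x_i)$, a Lieb-Robinson light-cone argument shows that each $|\psi(\bm x_i)\rangle = e^{-iH(\bm x_i)t}|0^n\rangle$ has entanglement bounded by a constant across every bipartition and that the trash register is only weakly correlated with the remaining qubits, with deviations decaying exponentially outside the light-cone. Consequently the ensemble $\rho = \mathbb{E}_{\bm x_i \sim \mathcal{D}}[|\psi(\bm x_i)\rangle\langle\psi(\bm x_i)|]$ already admits an approximate trash-decoupled decomposition, which is precisely what the compression needs to extract.

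Second, I would construct a candidate circuit using modern Hamiltonian-simulation primitives such as truncated Taylor series or qubitization. These yield a depth $\mathcal{O}(t\cdot\text{poly}\log(nt/\epsilon)) = \text{poly}\log(nt/\epsilon)$ circuit approximating $e^{+iH(\bm x_i)t}$ to spectral-norm error $\epsilon$. For any fixed $\bm x_i$ this inverse maps $|\psi(\bm x_i)\rangle$ back to $|0^n\rangle$ and saturates $\mathcal{L}$. A variational ansatz with the same layered local architecture and polylog depth is then expressive enough to contain this family of unitaries as the parameters vary.

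Third, I would eliminate the $\bm x_i$-dependence. Because the encoding couples qubits only within $\mathcal{O}(1)$-size light-cones, the approximate inverse factorizes into local blocks whose data dependence enters through local rotation angles; via an $\epsilon$-net covering of these angles together with averaging over $\mathcal{D}$, a single data-independent setting $\vec{\bm\theta}^*$ can be shown to match the per-input inverse on the ensemble up to aggregate error $\epsilon$, giving $\text{Tr}[\Pi_m U(\vec{\bm\theta}^*)\rho U(\vec{\bm\theta}^*)^{\dagger}] \geq 1-\epsilon$. This last step is the main obstacle: Steps 1 and 2 are largely invocations of standard toolkits, whereas quantifying the uniform-approximation error across the distribution requires a careful error budget combining Lieb-Robinson decay, Hamiltonian-simulation error, and the covering error for the variational parameters, and crucially relies on locality to keep per-block errors from accumulating beyond $\epsilon$.
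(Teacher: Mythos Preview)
Your proposal shares the two load-bearing ingredients with the paper's proof: the Lieb--Robinson light-cone bound and the $\mathrm{poly}\log(nt/\epsilon)$ depth guarantee for lattice Hamiltonian simulation (the paper cites Haah et al.\ for the latter). Where you diverge is in how the decoupling is actually argued. The paper's proof is far more terse: it observes that the encoding $e^{-iH(\bm x_i)t}$ itself admits a circuit of depth $\mathrm{poly}\log(nt/\epsilon)$, then invokes the light cone directly to conclude that information reaching the trash qubits $Q=\{q_1,\dots,q_m\}$ has only spread within a bounded region, so a circuit of the same depth suffices to disentangle $Q$ back to $|0^m\rangle$. It never constructs a global approximate inverse and never introduces any $\epsilon$-net or covering machinery.

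Your Step~3 is therefore more careful than the paper on a point the paper simply does not address: the theorem is explicitly labeled ``Informal,'' and the $\bm x_i$-dependence of the decoupling circuit is left implicit in the paper's argument. Your $\epsilon$-net route is a legitimate way to close that gap, though arguably heavier than needed---once one localizes to the backward light cone of the constant-many trash qubits, the relevant unitary acts on a region of bounded size, and the existence of a single data-independent parameter setting maximizing $\mathcal{L}$ could be argued more directly than via a global covering. In short: same skeleton (light cone plus polylog-depth simulation), but the paper stays local and stops early, while you go global (Step~2 approximates the full inverse) and then work harder in Step~3 to remove a data dependence the paper never confronts.
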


\begin{proof}
The fundamental idea of a quantum autoencoder is to approximately decouple $m$ qubits from the whole quantum system. Without loss of generality, we assume the Hamiltonian $H(\bm x_i)$ is defined on a constant-dimensional lattice, and that $\max\{\|\bm x_i\|_{\infty}|t|\}$ is a constant which does not increase with the system size. According to the Lieb-Robinson bound~\cite{haah2021quantum}, the required quantum circuit depth for the quantum feature state $|\psi(\bm x_i)\rangle$ is ${\rm poly}\log(nt/\epsilon)$. On the other hand, we label the target qubits to be trashed as $Q=\{q_1,\cdots,q_m\}$. Since the information only spreads within the light cone, as a result, a quantum circuit $U(\vec{\bm\theta})$ with circuit depth $d={\rm poly}\log(nt/\epsilon)$ suffices to decouple the qubit set $Q$ back to $|0^m\rangle$. 
\end{proof}

In general, training the variational quantum circuit $U(\vec{\bm\theta})$ would be classically hard in the worst-case scenario~\cite{bittel2021training}. Here, we rigorously prove that the sample complexity in training the quantum autoencoder is efficient, which can be summarized as the following result.

\begin{theorem}
For any quantum autoencoder task with loss function $\mathcal{L}(\vec{\bm\theta})$, then
    \begin{align}
        N=\mathcal{O}\left(\frac{3^{\log(1/\epsilon\delta)}\log(1/\epsilon\delta)\log(n)}{\epsilon^2}\right)
    \end{align}
samples drawn from $\mathbb{E}_{\bm x_i\sim \mathcal{D}}\left[|\psi(\bm x_i)\rangle\langle\psi(\bm x_i)|\right]$ suffice to estimate $\mathcal{L}(\vec{\bm\theta})$ throughout the entire training process, with success probability $\geq 1-\delta$. In other words, $N$-samples are sufficient to guarantee an $\epsilon$-approximation to $\mathcal{L}(\vec{\bm\theta})$ for all candidate variational parameters $\vec{\bm\theta}$.
\label{them:sample}
\end{theorem}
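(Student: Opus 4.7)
The plan is to express $\mathcal{L}(\vec{\bm\theta})$ as the expectation value of a local observable on the data--averaged state, apply the classical--shadow protocol with randomized Pauli measurements, and leverage the light--cone bound established in Theorem~\ref{them:depth} to control the effective support of that observable uniformly in $\vec{\bm\theta}$. First I would rewrite the loss as $\mathcal{L}(\vec{\bm\theta})=\mathrm{Tr}[O(\vec{\bm\theta})\rho]$, where $\rho=\mathbb{E}_{\bm x_i\sim\mathcal{D}}[|\psi(\bm x_i)\rangle\langle\psi(\bm x_i)|]$ and $O(\vec{\bm\theta})=U^{\dagger}(\vec{\bm\theta})\,\Pi_m\,U(\vec{\bm\theta})$. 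Drawing $\bm x_i\sim\mathcal{D}$ and preparing $|\psi(\bm x_i)\rangle$ is operationally identical to drawing a single copy of $\rho$, so $N$ such samples can be turned, via independent random single--qubit Pauli measurements, into classical--shadow snapshots $\hat\rho_1,\ldots,\hat\rho_N$. The estimator $\hat{\mathcal{L}}(\vec{\bm\theta})=\tfrac{1}{N}\sum_{j}\mathrm{Tr}[O(\vec{\bm\theta})\hat\rho_j]$ is then a purely classical post--processing of a $\vec{\bm\theta}$--independent dataset; this decoupling is what will eventually give the uniform guarantee.

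Next I would use Theorem~\ref{them:depth} to replace $O(\vec{\bm\theta})$ by a truly local surrogate. Since $U(\vec{\bm\theta})$ has depth $d=\mathrm{poly}\log(nt/\epsilon)$, the Lieb--Robinson bound together with its exponentially decaying tail implies that $O(\vec{\bm\theta})$ is approximated in operator norm by an observable $\tilde O(\vec{\bm\theta})$ supported on some region $\Lambda$ of size $k=\Theta(\log(1/\epsilon\delta))$, with truncation error at most $\epsilon/3$. On this truncated support the Pauli--shadow variance obeys $\|\tilde O\|_{\text{shadow}}^{2}\le 3^{k}\|\tilde O\|_\infty^{2}\le 3^{\log(1/\epsilon\delta)}$, and a standard median--of--means argument with $\Theta(\log(1/\delta))$ batches upgrades this to the high--probability deviation $|\hat{\mathcal{L}}(\vec{\bm\theta})-\mathcal{L}(\vec{\bm\theta})|\le\epsilon$ at a single $\vec{\bm\theta}$, using $N=\mathcal{O}\bigl(3^{\log(1/\epsilon\delta)}\log(1/\delta)/\epsilon^{2}\bigr)$ shadow samples.

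For the uniform--in--$\vec{\bm\theta}$ statement, the crucial observation is that the estimator depends on $\vec{\bm\theta}$ only through the Pauli expansion coefficients of $\tilde O(\vec{\bm\theta})$ on $\Lambda$. Across all possible light--cone positions there are at most $\binom{n}{k}\cdot 4^{k}=\mathrm{poly}(n)\cdot\mathrm{poly}(1/\epsilon\delta)$ relevant Pauli strings, so a single union bound over this finite family, combined with the classical--shadow concentration inequality, yields the claimed $\log n$ factor and the overall sample complexity $N=\mathcal{O}\bigl(3^{\log(1/\epsilon\delta)}\log(1/\epsilon\delta)\log(n)/\epsilon^{2}\bigr)$.

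The main obstacle is keeping parameter--space complexity out of the sample complexity. A naive $\epsilon$--net over $\vec{\bm\theta}$ would introduce a factor polynomial in the number of trainable angles, which for a variational circuit of width $n$ is far larger than the $\log n$ in the statement. The classical--shadow route circumvents this by collecting data independently of $\vec{\bm\theta}$ and pushing all $\vec{\bm\theta}$--dependence into classical post--processing; the real care lies in (i) quantifying the Lieb--Robinson truncation so that its residual error combines cleanly with the target $\epsilon$, and (ii) matching the shadow--norm scaling so that the light--cone tail exactly balances the $3^{k}$ growth of the Pauli--shadow variance, which is what pins down the exponent $\log 3$ in $3^{\log(1/\epsilon\delta)}$.
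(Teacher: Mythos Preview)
Your reduction to classical shadows on the data-averaged state $\rho$ and the decoupling of data collection from $\vec{\bm\theta}$ is exactly right, and matches the paper. The gap is in how you obtain the locality of $O(\vec{\bm\theta})=U^{\dagger}(\vec{\bm\theta})\Pi_m U(\vec{\bm\theta})$. You invoke Theorem~\ref{them:depth} and a Lieb--Robinson tail to truncate to support size $k=\Theta(\log(1/\epsilon\delta))$, but this does not follow. Lieb--Robinson applies to continuous-time local Hamiltonian dynamics, not to a generic variational circuit $U(\vec{\bm\theta})$; for a depth-$d$ brickwork circuit the backward light cone of $\Pi_m$ is \emph{exactly} supported on $\mathcal{O}(m+d)$ qubits with no exponential tail at all. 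Since Theorem~\ref{them:depth} only gives $d=\mathrm{poly}\log(nt/\epsilon)$, your argument yields $k=\mathrm{poly}\log(n)$ and hence a shadow cost $3^{\mathrm{poly}\log(n)}$, which is superpolynomial in $n$ and does not match the stated $3^{\log(1/\epsilon\delta)}$.

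The paper gets the $n$-independent locality by a different mechanism. It assumes each layer of $U(\vec{\bm\theta})$ is \emph{locally scrambling} (invariant under single-qubit random rotations) and then performs layer-by-layer Heisenberg backpropagation with truncation to Pauli weight $\le k$. Lemma~11 of Angrisani et al.\ \cite{angrisani2024classically} shows that, on average over the scrambling ensemble, the truncation error obeys $\mathbb{E}_{U}\bigl|\mathrm{Tr}[(O_U-O_U^{(k)})\rho]\bigr|\le(2/3)^{(k+1)/2}\|\Pi_m\|_{\mathrm{Pauli},2}$, independent of the circuit depth. Markov's inequality then converts this into a guarantee holding for all but a $\delta$-fraction of circuits, after which the classical-shadow bound $N=\mathcal{O}(3^k\log(n^k)/\epsilon^2)$ with $k=\mathcal{O}(\log(1/\epsilon\delta))$ finishes. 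Note that this also means the paper's ``for all $\vec{\bm\theta}$'' is really an average-case statement over the random ansatz, not the worst-case uniformity you were aiming for with the union bound; your worst-case reading is strictly stronger and, as written, cannot be achieved with the claimed sample complexity.
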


% We trained a total of 18 quantum autoencoders, covering all combinations of system size (8 or 10 qubits), circuit depth (10, 20, or 30 layers), and number of trash qubits (1, 2, or 3). Rather than focusing on analyzing architecture effects individually, our goal was to train multiple models and evaluate them all, analogous to hyperparameter search in classical machine learning. For downstream comparisons, we selected the best-performing model to benchmark against the Hamiltonian baseline. 
The overall structure of the quantum autoencoder circuit is illustrated in Fig.~\ref{fig:qae_architecture}.

\begin{figure}[htbp]
\centering
\begin{tikzpicture}[scale=0.8]  % Slightly smaller for paper
% Define custom colors
\definecolor{qae}{HTML}{f599a1}
\definecolor{hamiltonian}{HTML}{5299cc}
\definecolor{improvement}{HTML}{73c79e}
\definecolor{secondary}{HTML}{95aeda}

\node[left] at (-0.5,1.4) {$|0\rangle^{\otimes n}$};

% Input qubits
\foreach \i in {0,1,2,3,4,5,6,7} {
    \draw (0,\i*0.35) -- (7.5,\i*0.35);
}

% Hamiltonian block - using hamiltonian color
\draw[thick, hamiltonian, fill=hamiltonian!20] (0.5,-0.2) rectangle (2.5,2.6);
\node at (1.5,1.6) {\Large\textbf{$H(\bm x_i)$}};
\node at (1.5,0.9) {\scriptsize Hamiltonian};
\node at (1.5,0.5) {\scriptsize Evolution};

% Encoder block - using qae color
\draw[thick, qae, fill=qae!20] (3,-0.2) rectangle (5,2.6);
\node at (4,1.6) {\Large\textbf{$U(\vec{\bm\theta})$}};
\node at (4,0.7) {\scriptsize Encoder};

% Latent qubits (kept) - using improvement color (mint green)
\foreach \i in {0,1,2,3,4} {
    \draw[thick, improvement] (5,\i*0.35) -- (7.5,\i*0.35);
}

% Traced qubits (measured) - using secondary color (periwinkle)
\foreach \i in {5,6,7} {
    \draw[thick, secondary] (5,\i*0.35) -- (6.8,\i*0.35);
    \draw[secondary, fill=secondary!20] (6.8,\i*0.35-0.08) rectangle (7.2,\i*0.35+0.08);
    \node[right, secondary, font=\tiny] at (7.3,\i*0.35) {$|0\rangle$?};
}

% Baseline separation (after Hamiltonian) - using hamiltonian color
\draw[thick, dashed, hamiltonian] (2.7,-0.4) -- (2.7,2.8);
\node[above, hamiltonian, font=\scriptsize] at (2.7,2.9) {Hamiltonian Baseline};
\node[below, hamiltonian, font=\scriptsize] at (2.7,-0.6) {$|\psi(\bm x_i)\rangle$};

% QAE embedding separation (after encoder) - using qae color
\draw[thick, dashed, qae] (7.0,-0.1) -- (7.0,1.6);
\node[below, qae, font=\scriptsize] at (7.0,-0.5) {Learned Representation};
\node[below, qae, font=\scriptsize] at (7.0,-0.9) {$\phi(\bm x_i)$};

% Large } bracket covering the 5 latent qubits - using improvement color
% \node[improvement, font=\huge] at (7.7,0.7) {$\}$};
% Scaled } bracket
\node[improvement, font=\huge, scale=1.8] at (7.7,0.7) {$\}$};

% Side labels with custom colors
\node[right, improvement, font=\small] at (8.0,0.7) {\textbf{Latent Space}};
\node[right, improvement, font=\scriptsize] at (8.0,0.35) {(Kept qubits)};
\node[right, secondary, font=\small] at (8.0,2.1) {\textbf{Traced Out}};
\node[right, secondary, font=\scriptsize] at (8.0,1.75) {(Measured qubits)};

\end{tikzpicture}
\caption{Quantum autoencoder architecture for unsupervised representation learning. The circuit processes input peptide sequences in two stages: (1) Hamiltonian evolution $H(\bm x_i)$ encodes classical data into quantum states, and (2) a parameterized encoder $U(\vec{\bm\theta})$ performs dimensional compression. The encoder discards qubits (blue) while preserving latent information (green), producing compressed quantum representations $\phi(\bm x_i)$ for downstream peptide classification tasks.}
\label{fig:qae_architecture}
\end{figure}
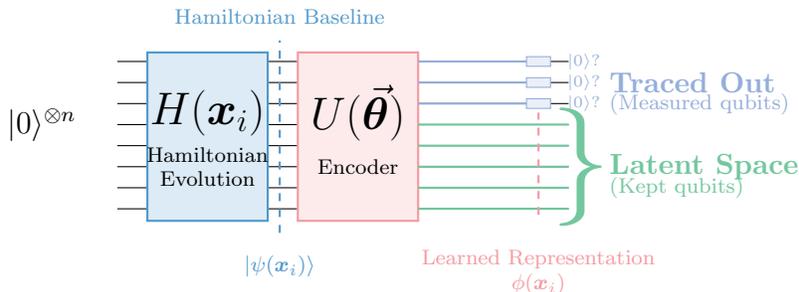

\subsection{Pre-training Setup}
Training was performed on a large-scale dataset of 3 million peptide sequences, represented as flattened one-hot vectors and then mapped into quantum states via Hamiltonian evolution. We used mini-batch training with a batch size of 1024, optimizing with Adam~\cite{KingmaB14} at a learning rate of $10^{-2}$. Models with 8 input qubits were trained for up to 10 epochs, while models with 10 input qubits were trained for up to 5 epochs. 
To ensure fairness, we trained 18 quantum autoencoder variants covering all combinations of input size (8 or 10 qubits), circuit depth (10, 20, or 30 layers), and number of traced-out qubits (1, 2, or 3). The best-performing variant for each dataset was selected based on cross-validation accuracy, analogous to hyperparameter optimization in classical machine learning. The Hamiltonian baseline, by contrast, involves no trainable parameters and serves as a fixed reference encoding, but was evaluated under the same cross-validation splits for comparability.

Our aim in training multiple models was not to analyze architectural effects individually, but to ensure coverage of a representative model space. This setup provides a fair test of whether the quantum autoencoder can successfully learn from large-scale peptide data and produce representations that outperform fixed Hamiltonian encodings.

\subsection{Downstream Evaluation}
\label{sec:kernel}
\subsubsection{Quantum Autoencoder Kernel}
\label{sec:qae_kernel}
Once trained, the quantum autoencoder provides compressed representations obtained
from the unmeasured subsystem of the quantum state. To evaluate these representations in
supervised learning tasks, we construct a similarity kernel as follows. First, pairwise trace
distances are computed between the reduced density matrices corresponding to different
samples. These distances are then converted into similarities by applying $k_{{\rm QAE}}(\bm x_i,\bm x_j) = 1 -
 {\rm dist}(\bm x_i,\bm x_j)$, where the trace distance 
\begin{align}
    {\rm dist}(\bm x_i,\bm x_j)=\frac{1}{2}\|\phi(\bm x_i)-\phi(\bm x_j)\|_1
\end{align}
and the quantum feature state $\phi(\bm x_i)$ (a $(2^{n-m}\times 2^{n-m})$-sized density matrix) is given by Eq.~\ref{Eq:quantumfeature}. Generally, estimating the trace distance between mixed states are even quantum hard. However, the condition $m\ll n$ enables $\phi(\bm x_i)$ being a low-rank density matrix for $i\in[N]$, and the related trace distance can be efficiently estimated with the sample complexity of $\mathcal{O}({\rm rank}^2(\phi(\bm x_j))/\epsilon^5)$, where $\epsilon$ represents the additive error~\cite{wang2023fast}.

\subsubsection{Hamiltonian Baseline Kernel}
\label{sec:hamiltonian_kernel}
This quantum autoencoder based kernel is compared against a baseline kernel constructed directly from
Hamiltonian-encoded states without training. Specifically, suppose $\mathcal{D}=\{(\bm x_i)\}_{i=1}^N$ representing the flattened one-hot vectors, the Hamiltonian encoding method achieves the map $\bm x_i\mapsto|\psi(\bm x_i)\rangle=e^{-iH(\bm x_i)t}|0^n\rangle$~\cite{Zhuang2024-vn}. Here, we suppose the one-hot vector $\bm x_i\in\mathbb{R}^L$ represents a $M$-dimensional real vector, the involved local Hamiltonian $H(\bm x_i)=\sum_{s=1}^L\bm x_i(s)P_s$, and $P_s$ represents a `constant-weight' Pauli operator non-trivially acting on constant number of qubits. The quantum kernel function is defined as $k_H(\bm x_i,\bm x_j)=|\langle\psi(\bm x_i)|\psi(\bm x_j)\rangle|^2$, and it can be computed efficiently by using the SWAP-test method~\cite{buhrman2001quantum}..

In both cases, the resulting kernel matrices are used to train classical SVMs for downstream classification.
The classification accuracy obtained with the Hamiltonian baseline serves as the reference point
for measuring improvements achieved by the learned representations. This procedure benchmarks whether unsupervised quantum representation learning can provide advantages over fixed encodings in peptide sequence classification.

% allows us to benchmark whether unsupervised quantum representation learning can provide
% advantages over fixed encodings in peptide sequence classification.

\section{Numerical Simulation Results}

\subsection{Quantum Autoencoders Training}
\begin{figure}[t]
    \centering
    \includegraphics[width=0.78\linewidth]{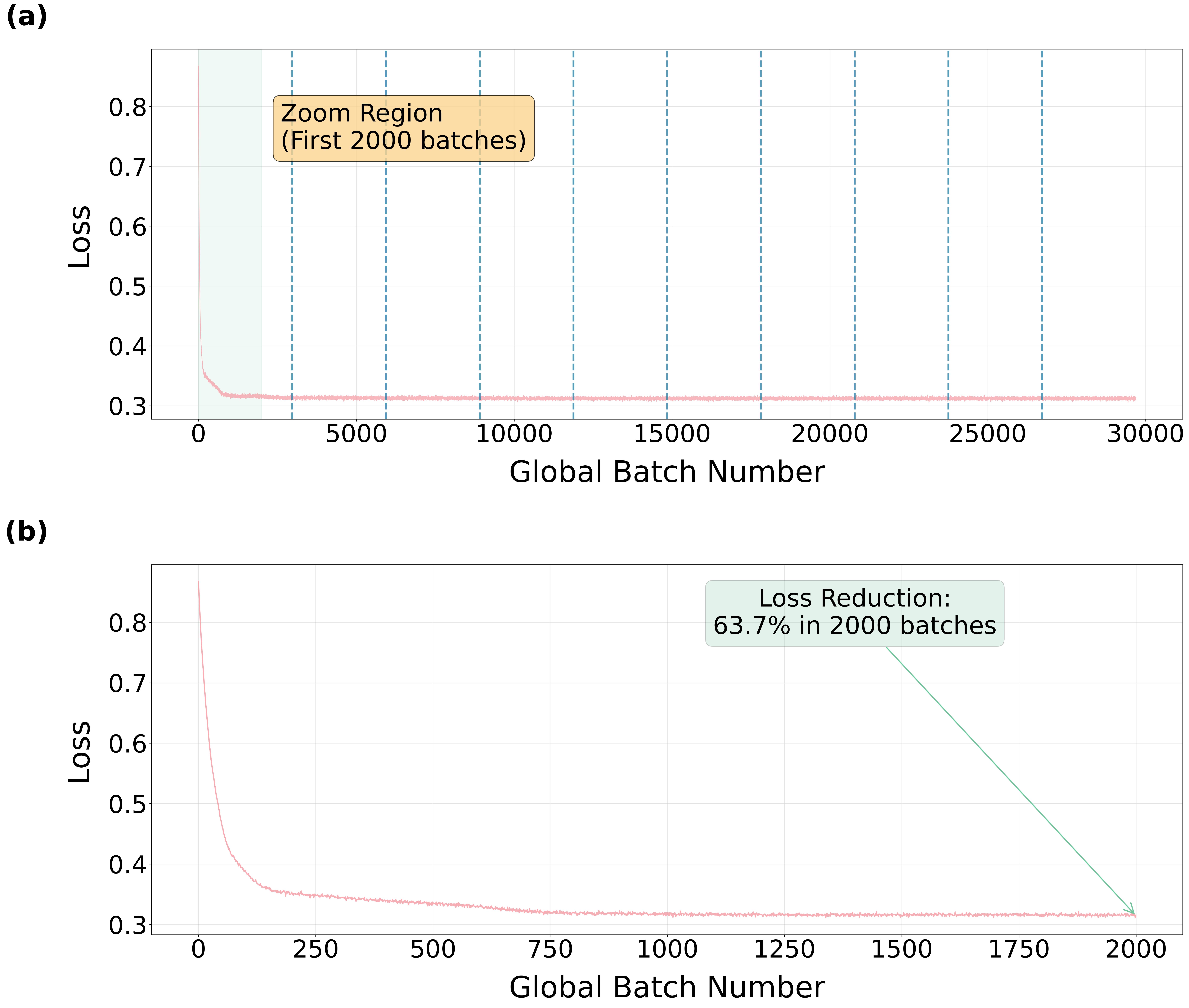}
    \caption{Training loss convergence during quantum autoencoder pre-training. (a) Complete training trajectory showing loss versus global batch number across all epochs, with the zoom region highlighted in green covering the first 2000 batches where rapid convergence occurs. Blue vertical dashed lines indicate epoch boundaries.
    (b) Detailed view of the first 2000 batches showing the rapid loss reduction, where the model achieves 63.7\% loss reduction. The loss curve demonstrates fast initial convergence followed by stabilization. }
    % \caption{Training loss of the quantum autoencoders. The figure shows both the full training curve 
    % and a zoomed-in view of the first 2000 batches, where the majority of loss reduction occurs. 
    % This illustrates the rapid convergence of quantum autoencoders training.}
    \label{fig:qae_training_loss}
\end{figure}

Training of the quantum autoencoders converged rapidly across all 18 variants. 
Figure~\ref{fig:qae_training_loss} shows the trajectory for one representative model. 
More than 60\% of the total loss reduction occurs within the first 2000 batches, 
corresponding to less than a single epoch over the 3 million peptide corpus. 
After this initial phase, the training loss quickly stabilizes at a low value 
and remains consistent throughout the remainder of training. 

Across all architectures, final training losses ranged from approximately 0.15 to 0.5, 
reflecting variation in model capacity and compression settings. Gradient norms remained 
stable throughout training, indicating that optimization was not affected by vanishing 
gradients. Notably, we did not observe the exponentially small gradient magnitudes 
that are characteristic of barren plateaus, even in deeper circuits with up to 30 layers. 
Overall, the convergence pattern was highly consistent: every quantum autoencoder reached a stable 
loss in the early stages of training. These results confirm that the quantum autoencoder can be trained efficiently at scale, enabling subsequent evaluation of 
learned representations on downstream peptide classification tasks.

\subsection{Classification Results}
\begin{figure}[t]
    \centering
    \includegraphics[width=\linewidth]{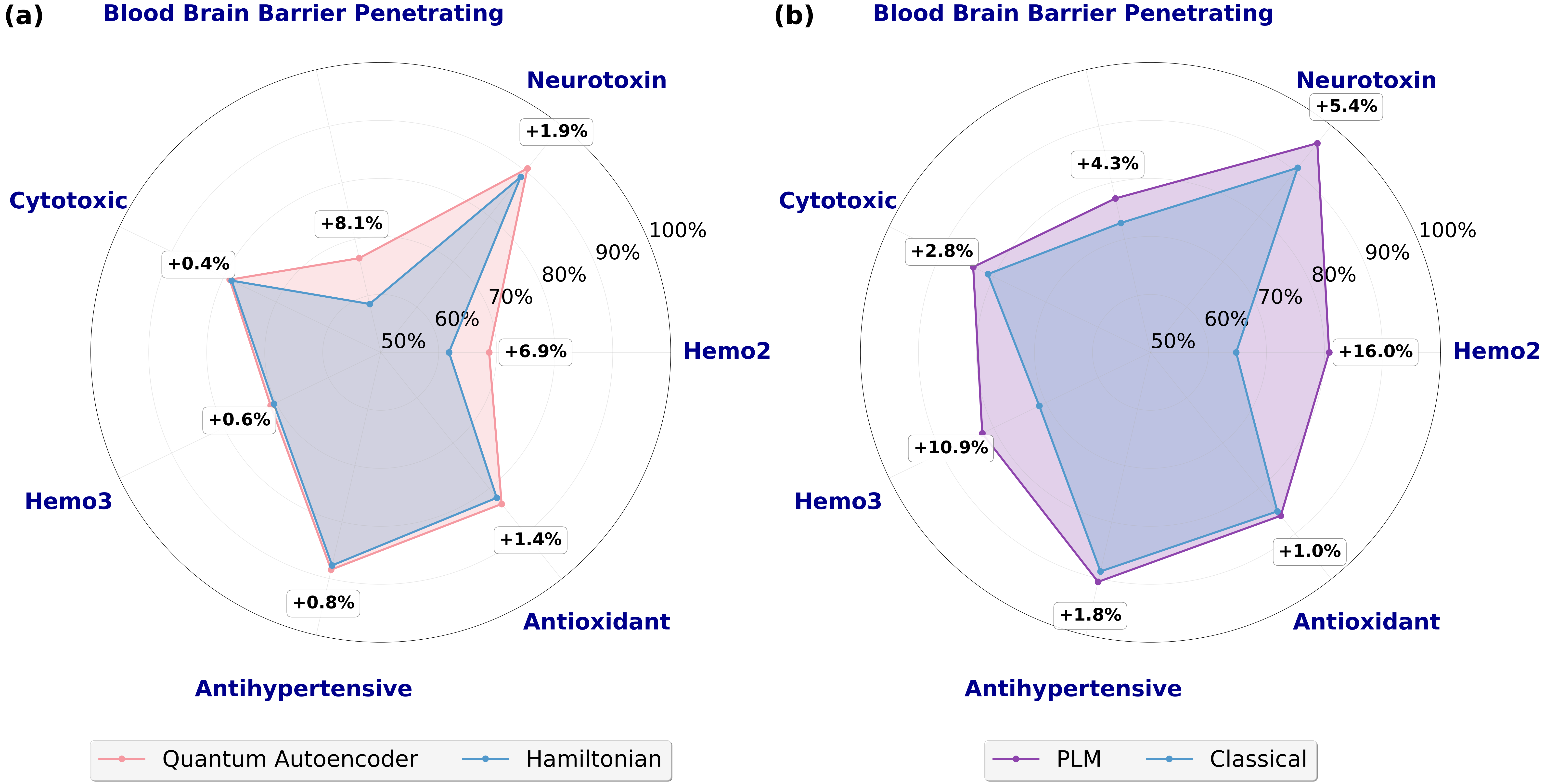}
    \caption{Classification accuracy comparison across seven peptide datasets.    
    (a) Quantum autoencoder representations compared to Hamiltonian baselines. 
    (b) PLM representations compared to classical baselines. 
    The two comparisons use different baselines and are therefore not directly comparable, 
    but both highlight the benefit of representation learning.}
    \label{fig:classification_results}
\end{figure}
We evaluated the learned quantum autoencoder representations on seven peptide classification tasks and 
compared them with the Hamiltonian baseline. In parallel, we evaluated PLMs against classical baselines. The results of these two comparisons are shown in Figure~\ref{fig:classification_results}. 

Following the pre-training setup, classification performance was evaluated using the best-performing quantum autoencoder variant for each dataset, selected by cross-validation. The downstream evaluation procedure for quantum kernels is described in Section~\ref{sec:kernel}. Relative to the Hamiltonian kernel baseline, the quantum autoencoder improved classification accuracy across all datasets, with an average gain of +2.9\%. This indicates that unsupervised quantum autoencoder pre-training can capture non-trivial features beyond those present in the fixed Hamiltonian encoding.

For the classical experiments, the baseline was chosen as the best of five candidate 
classifiers trained directly on flattened one-hot encoding, while the PLM results 
corresponds to the best of 12 pre-trained models. PLMs achieved larger improvements, 
with an average gain of +6.0\%. The representation pipelines and evaluation procedure for these experiments are described in Section~\ref{sec:representation_pipelines}.

Although the two radar plots are not directly comparable due to different baselines, 
both highlight the importance of representation learning. In the quantum case, quantum autoencoder representations outperform Hamiltonian encodings, while in the classical case, PLM representations 
outperform one-hot encoding. Together, these results establish the quantum autoencoder as 
a proof of principle of a quantum analogue of PLM-style pre-training. 

At the same time, several limitations remain. The quantum autoencoder improvements, while consistent, 
are smaller than those achieved by PLMs, and all experiments were conducted on simulators without 
device noise. Moreover, the Hamiltonian encoding may not be optimal for long peptide 
sequences, suggesting room for alternative encoding strategies. Another limitation is that quantum evaluations were restricted to kernel-based SVMs. Unlike PLM or one-hot encodings, which output feature vectors that can be fed directly to diverse classifiers, the quantum autoencoder produces reduced density matrices as representations. Extracting feature vectors from mixed states would require full state tomography~\cite{vogel1989determination}, which is prohibitively expensive for large systems~\cite{haah2016sample}. Although classical shadow provides an efficient alternative for predicting local properties of quantum states~\cite{huang2020predicting}, its applicability to long-range entanglement properties still requires an exponential amount of samples. For this reason, quantum kernel methods currently provide the most straightforward way to benchmark quantum representations.   

% Future work could investigate larger-scale quantum autoencoders, hybrid quantum–classical pipelines, 
% and possible integration with PLM representations to combine the strengths of both paradigms.

% \begin{figure}[t]
%     \centering
%     % Left radar plot
%     \begin{subfigure}[t]{0.4\linewidth}
%         \centering
%         \includegraphics[width=\linewidth]{figures/plm_vs_classical_baseline/radar_comparison.png}
%         \caption{PLMs compared to classical baselines.}
%         \label{fig:plm_vs_classical}
%     \end{subfigure}
%     \hfill
%     % Right radar plot
%     \begin{subfigure}[t]{0.4\linewidth}
%         \centering
%         \includegraphics[width=\linewidth]{figures/qae_vs_hamiltonian/radar_comparison.png}
%         \caption{Quantum autoencoders compared to Hamiltonian baselines.}
%         \label{fig:qae_vs_hamiltonian}
%     \end{subfigure}
%     \caption{Classification accuracy comparison across seven peptide datasets.    
%     (a) Quantum autoencoder representations compared to Hamiltonian baselines. 
%     (b) PLM representations compared to classical baselines. 
%     The two comparisons use different baselines and are therefore not directly comparable, 
%     but both highlight the benefit of representation learning.}
%     \label{fig:classification_results}
% \end{figure}

\section{Conclusion} 
We presented a quantum autoencoder approach for representation learning on peptide sequences. Quantum autoencoders were shown to train efficiently on a large-scale peptide corpus without encountering barren plateaus, and consistently improved classification performance over Hamiltonian baselines across seven datasets. These results demonstrate that unsupervised quantum pre-training can extract non-trivial features from biological sequences, providing a quantum analogue of representation learning. At the same time, the magnitude of improvement was smaller than that achieved by PLMs over classical baselines, highlighting both the feasibility and the current limitations of quantum representation learning. More broadly, much of the existing QML literature still relies on hard-coded, task-agnostic encoding strategies for mapping classical data into quantum states. Such fixed encodings often fail to adapt to the statistical structure of real-world data, limiting their downstream effectiveness. Our results suggest that learned quantum representations, enabled by quantum autoencoder pre-training, offer a promising alternative. Potential future work includes scaling quantum autoencoders to larger architectures, developing hybrid quantum-classical pipelines, and integrating with PLMs to combine the strengths of both paradigms. Together, these directions have strong potential to advance QML in bioinformatics applications. 

% \subsubsection*{Author Contributions}
% If you'd like to, you may include  a section for author contributions as is done
% in many journals. This is optional and at the discretion of the authors.

% \subsubsection*{Acknowledgments}
% Use unnumbered third level headings for the acknowledgments. All
% acknowledgments, including those to funding agencies, go at the end of the paper.

\clearpage
\bibliography{main.bbl}
\bibliographystyle{unsrt}
\clearpage
\appendix
% \section{Appendix}
% \subsection*{Variational Circuit depth complexity of Quantum Autoencoder}
\section{Variational Circuit depth complexity of Quantum Autoencoder}
Consider the initial quantum feature state $|\psi(\bm x_i)\rangle=e^{-iH(\bm x_i)t}|0^n\rangle$ with the local Hamiltonian matrix $H(\bm x_i)=\sum_{s=1}^L\bm x_i(s)P_s$, where $P_s$ represents a `constant-weight' Pauli operator non-trivially acting on constant number of qubits. The essential idea of quantum autoencoder aims to train a quantum circuit $U(\vec{\bm\theta})$, enabling 
\begin{align}
    U(\vec{\bm\theta})\left(\mathbb{E}_{\bm x_i\sim \mathcal{D}}\left[|\psi(\bm x_i)\rangle\langle\psi(\bm x_i)|\right]\right)U^{\dagger}(\vec{\bm\theta})\approx |0^m\rangle\langle 0^m|\otimes \sigma,
\end{align}
where $\sigma$ represents a $(n-m)$-qubit density matrix, and the number of decoupled qubits $m$ is a constant. In the main file, we suggested that the quantum autoencoder circuit $U(\vec{\bm\theta})$ can be trained via minimizing a loss function. We first evaluate the necessary quantum circuit depth of $U(\vec{\bm\theta})$.

\begin{theorem}[Theorem~\ref{them:depth} in the main text]
    Given the Hamiltonian encoding method $|\psi(\bm x_i)\rangle=e^{-iH(\bm x_i)t}|0^n\rangle$, where the Hamiltonian $H(\bm x_i)=\sum_{s=1}^L\bm x_i(s)P_s$, with $P_s$ representing a `constant-weight' Pauli operator non-trivially acting on constant number of qubits. Then the quantum circuit depth of $U(\vec{\bm\theta})$ satisfies $d\leq {\rm poly}\log(nt/\epsilon)$ sufficing to implement the quantum autoencoder task.
\end{theorem}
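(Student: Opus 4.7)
The plan is to prove the depth bound through a Lieb--Robinson light-cone argument that matches the complexity of $U(\vec{\bm\theta})$ to that of the forward encoding. Since $H(\bm x_i)=\sum_{s=1}^L x_{i,s}P_s$ is a sum of constant-weight Pauli operators on a constant-dimensional lattice and $\max\{\|\bm x_i\|_\infty \cdot |t|\}=O(1)$, standard lattice Hamiltonian simulation results (e.g.\ Haah et al.~\cite{haah2021quantum}) give a forward circuit $V_i\approx e^{-iH(\bm x_i)t}$ of depth $d_0=\mathrm{poly}\log(nt/\epsilon)$ built from $O(1)$-local gates. This sets the reference scale for the inverse problem that $U(\vec{\bm\theta})$ must solve.

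Next I would exploit locality to reduce the decoupling task to a small region. Label the trash qubits $Q=\{q_1,\ldots,q_m\}$ and let $B_r(Q)$ denote the ball of lattice radius $r=O(vt+\log(n/\epsilon))$ around $Q$, where $v$ is the Lieb--Robinson velocity. By the Lieb--Robinson bound, any operator supported on $Q$ evolved under $H(\bm x_i)$ is approximated up to additive error $\epsilon$ by its truncation to $B_r(Q)$; equivalently, the marginal state on $Q$ after applying the inverse evolution restricted to $B_r(Q)$ is exponentially close to $|0^m\rangle\langle 0^m|$. Because $|B_r(Q)|=\mathrm{poly}\log(nt/\epsilon)$ in a constant-dimensional lattice, a target unitary $U^*$ implementing this restricted inverse can itself be synthesized by a circuit of depth $\mathrm{poly}\log(nt/\epsilon)$ using brick-wall or sequential constructions over the light-cone region. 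Taking the variational ansatz $U(\vec{\bm\theta})$ to be any universal local ansatz of this depth then guarantees that $U^*$ lies in (or arbitrarily close to) its expressible image, so there exists $\vec{\bm\theta}^*$ with $\mathcal{L}(\vec{\bm\theta}^*)\geq 1-\epsilon$.

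The main obstacle is that the natural target $U^*$ depends on $\bm x_i$, while $U(\vec{\bm\theta})$ must be data-independent, and the loss in Eq.~\ref{Eq:lossfunction} averages over $\bm x_i\sim\mathcal{D}$. I would address this by using linearity of $\mathcal{L}(\vec{\bm\theta})$ in the ensemble state $\mathbb{E}_{\bm x_i}[|\psi(\bm x_i)\rangle\langle\psi(\bm x_i)|]$: the existence of a single good $\vec{\bm\theta}^*$ follows once the ansatz on $B_r(Q)$ is expressive enough to approximate the family $\{V_i^\dagger|_{B_r(Q)}\}$ in an average-case sense. Since the local support and lattice geometry of $H(\bm x_i)$ are shared across the data (only the coefficients $\bm x_i(s)$ vary), this reduces to a universal-approximation statement on a region of poly-log size, consistent with the claimed depth budget. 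This data-averaging step is the most delicate and, in my view, warrants a more careful treatment than the informal sketch provides; everything else is essentially bookkeeping around the light-cone estimate.
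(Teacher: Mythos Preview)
Your approach matches the paper's: both invoke the Lieb--Robinson light-cone argument (citing Haah et al.~\cite{haah2021quantum}) to bound the depth of the forward evolution $e^{-iH(\bm x_i)t}$ by $\mathrm{poly}\log(nt/\epsilon)$ and then argue that, since information from the trash qubits $Q=\{q_1,\ldots,q_m\}$ spreads only within this light cone, a circuit $U(\vec{\bm\theta})$ of the same depth suffices to decouple $Q$ back to $|0^m\rangle$. Your treatment is in fact more detailed than the paper's two-sentence sketch, and the data-dependence subtlety you flag as ``the most delicate step'' is not resolved there either---the theorem is explicitly labeled ``Informal'' in the main text.
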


\begin{proof}
The fundamental idea within quantum autoencoder is to approximately decouple $m$ qubits from the whole quantum system. Without loss of generality, we assume the Hamiltonian $H(\bm x_i)$ is defined on a constant-dimensional lattice, and $\max\{\|\bm x_i\|_{\infty}|t|\}$ is a constant which does not increase with the system size. According to the Lieb-Robinson bound~\cite{haah2021quantum}, the required quantum circuit depth for the quantum feature state $|\psi(\bm x_i)\rangle$ is ${\rm poly}\log(nt/\epsilon)$. On other hand, we label the target qubits to be trashed as $Q=\{q_1,\cdots,q_m\}$. Since the information only spreads within the light-cone, as a result, a quantum circuit $U(\vec{\bm\theta})$ with circuit depth $d={\rm poly}\log(nt/\epsilon)$ suffices to decouple qubit set $Q$ back to $|0^m\rangle$. 
\end{proof}

% \subsection*{Sample Complexity of Quantum Autoencoder}
\section{Sample Complexity of Quantum autoencoder}
Let the density matrix $\rho_{\mathcal{D}}=\mathbb{E}_{\bm x_i\sim \mathcal{D}}\left(|\psi(\bm x_i)\rangle\langle\psi(\bm x_i)|\right)$, then one can rewrite the loss function
\begin{align}
    \mathcal{L}(\vec{\bm\theta})={\rm Tr}\left[\Pi_{m}U(\vec{\bm\theta})\rho_{\mathcal{D}}U^{\dagger}(\vec{\bm\theta})\right]={\rm Tr}\left[U^{\dagger}(\vec{\bm\theta})\Pi_{m}U(\vec{\bm\theta})\rho_{\mathcal{D}}\right].
\end{align}
Here, we aim to answer the question ``\emph{How many samples to $\rho_{\mathcal{D}}$ suffices to estimate $\mathcal{L}(\vec{\bm\theta})$ within $\epsilon$ additive error for all candidate $\vec{\bm\theta}$},'' by studying the support size of the operator $U^{\dagger}(\vec{\bm\theta})\Pi_{m}U(\vec{\bm\theta})$. We first give the main theoretical result, then detail the proof.

\begin{theorem}[Theorem~\ref{them:sample} in the main text]
For any quantum autoencoder task with loss function $\mathcal{L}(\vec{\bm\theta})$ (defined as~Eq.\eqref{Eq:lossfunction}), then
    \begin{align}
        N=\mathcal{O}\left(\frac{3^{\log(1/\epsilon\delta)}\log(1/\epsilon\delta)\log(n)}{\epsilon^2}\right)
    \end{align}
samples drawn from $\rho_{\mathcal{D}}$ suffice to estimate $\mathcal{L}(\vec{\bm\theta})$ throughout the entire training process, with success probability $\geq 1-\delta$. In other words, $N$-samples are sufficient to guarantee an $\epsilon$-approximation to $\mathcal{L}(\vec{\bm\theta})$ for all candidate variational parameters $\vec{\bm\theta}$.
\end{theorem}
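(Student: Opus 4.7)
My plan is to reformulate the loss in the Heisenberg picture and estimate it uniformly over $\vec{\bm\theta}$ using classical shadow tomography. Define the $\vec{\bm\theta}$-dependent observable $O(\vec{\bm\theta}) := U^{\dagger}(\vec{\bm\theta})\,\Pi_m\,U(\vec{\bm\theta})$, so that $\mathcal{L}(\vec{\bm\theta}) = \mathrm{Tr}[O(\vec{\bm\theta})\,\rho_{\mathcal D}]$. By Theorem~\ref{them:depth}, $U(\vec{\bm\theta})$ has depth $d = \mathrm{polylog}(nt/\epsilon)$. I then apply a Lieb--Robinson / causal-cone argument to show that $O(\vec{\bm\theta})$ is $\epsilon/3$-close in operator norm to a truncated observable $\tilde{O}_k(\vec{\bm\theta})$ whose support consists of only $k = \mathcal{O}(\log(1/\epsilon\delta))$ qubits surrounding the $m$ trash qubits. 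This ``locality reduction'' is the step that ultimately forces the sample complexity to scale polylogarithmically in $n$.

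Next, I invoke the classical shadow framework with random single-qubit Clifford (Pauli) measurements performed directly on samples of $\rho_{\mathcal D}$. For any observable supported on at most $k$ qubits with $\|O\|_\infty \le 1$, the Pauli shadow norm satisfies $\|\tilde{O}_k\|_{\mathrm{shadow}}^2 = \mathcal{O}(3^k)$, and the standard median-of-means concentration bound shows that
\[
N_0 \;=\; \mathcal{O}\!\left(\frac{3^k\,\log(1/\delta')}{\epsilon^2}\right)
\]
shadow samples suffice to estimate $\mathrm{Tr}[\tilde{O}_k\,\rho_{\mathcal D}]$ within additive error $\epsilon/3$ with failure probability at most $\delta'$. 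Substituting $k = \log(1/\epsilon\delta)$ produces precisely the $3^{\log(1/\epsilon\delta)}\log(1/\epsilon\delta)/\epsilon^2$ scaling claimed in the theorem.

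The remaining step is to promote this pointwise guarantee to a uniform bound over all candidate variational parameters $\vec{\bm\theta}$. I would construct an $(\epsilon/3)$-net $\mathcal{N}$ on the parameters of the gates inside the causal cone of the trash register; since the cone contains only $\mathrm{poly}(d,k)$ gates and $\mathcal{L}$ is $\mathcal{O}(1)$-Lipschitz in every angle, $|\mathcal{N}|$ is polynomial in $n$ and $1/\epsilon$. Setting $\delta' = \delta/|\mathcal{N}|$ and applying a union bound contributes the $\log n$ factor; Lipschitz continuity then extends the estimate to all $\vec{\bm\theta}$ not in $\mathcal{N}$. Combining the three ingredients yields the stated $N$.

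The step I expect to be hardest is justifying the locality reduction at the claimed rate. For a strict brick-wall ansatz the causal cone is \emph{exact}, so naive truncation does not decay continuously with $k$ and one only obtains $k = \mathrm{polylog}(n/\epsilon)$, corresponding to the full cone size. Pushing $k$ down to $\log(1/\epsilon\delta)$ requires either a soft Lieb--Robinson tail tailored to the specific ansatz family or an averaging argument over $\rho_{\mathcal D}$ in which contributions from gates far from the trash region concentrate around their mean. If only the exact-cone bound is achievable, the overall sample complexity degrades to $3^{\mathrm{polylog}(n/\epsilon)}\log n/\epsilon^2$ -- still polynomial in $n$, but weaker than the stated exponent. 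The shadow concentration, net construction, and union bound are otherwise routine once the locality statement is established.
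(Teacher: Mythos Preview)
Your high-level skeleton (pass to the Heisenberg picture, truncate the back-propagated observable, then invoke classical shadows) matches the paper, but the mechanism you use for the locality reduction is different from the paper's and, as you yourself suspect, cannot reach the claimed exponent.

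The paper does \emph{not} use a Lieb--Robinson / causal-cone argument to bound the support of $O(\vec{\bm\theta})$. Instead it performs a \emph{low-Pauli-weight truncation}: at each layer $l$ one keeps only those Pauli strings $P$ with $|P|\le k$ in the expansion of $U_l^\dagger O_{l-1} U_l$. The key input is the additional assumption that each two-qubit gate is drawn from a \emph{locally scrambling} ensemble (invariant under single-qubit rotations). Under this assumption, Lemma~11 of Angrisani et al.\ gives
\[
\mathbb{E}_{U(\vec{\bm\theta})}\Bigl[\,\bigl|\mathrm{Tr}\bigl[(O_{U(\vec{\bm\theta})}-O^{(k)}_{U(\vec{\bm\theta})})\rho\bigr]\bigr|\,\Bigr]\;\le\;(2/3)^{(k+1)/2}\,\|\Pi_m\|_{\mathrm{Pauli},2},
\]
and Markov's inequality then yields $k=\mathcal{O}(\log(1/\epsilon\delta))$ with the $1-\delta$ holding over the \emph{random choice of circuit parameters}, not over measurement randomness. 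This averaging over $U$ is exactly the ``soft'' decay you were looking for and cannot be replaced by a deterministic light-cone bound: as you note, for a brick-wall circuit the cone is hard-edged and one is stuck at $k=\mathrm{polylog}(n/\epsilon)$, which would blow up the $3^k$ factor.

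Two further discrepancies follow from this. First, the $\log n$ in the paper's bound does not come from an $\epsilon$-net over parameters; it comes from the classical-shadow union bound over the $M=\mathcal{O}(n^k)$ Pauli terms appearing in $O^{(k)}$, i.e.\ $\log M = k\log n$. No net construction or Lipschitz argument is used. Second, the meaning of ``success probability $\ge 1-\delta$'' in the statement is that the truncation is accurate for all but a $\delta$-fraction of parameter settings in the locally scrambling ensemble; your reading of it as a worst-case-in-$\vec{\bm\theta}$, high-probability-over-shadows guarantee is stronger than what the paper actually proves.
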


\begin{proof}
Suppose the variational quantum circuit 
\begin{align}
    U(\vec{\bm\theta})=U_d(\bm\theta_d)U(\bm\theta_{d-1})\cdots U_1(\bm\theta_1),
\end{align}
where each $U(\bm\theta_l)$ represents a layer of two-qubit gates $U_{j,l}$ for $l\in[d]$. Without loss of generality, we assume the distribution of each two-qubit gate $U_{j,l}$ is ``locally scrambling'', that is invariant under single-qubit rotations. This property is satisfied by a wide range of deep and shallow unstructured parameterised quantum circuits. 

The proof is based on the backward propagation in the context of Heisenberg picture of the operator $O_{U(\vec{\bm\theta})}=U^{\dagger}(\vec{\bm\theta})\Pi_{m}U(\vec{\bm\theta})$. For the last layer of $U(\vec{\bm\theta})$, since $|{\rm supp}(\Pi_m)|=\mathcal{O}(1)$, and $U_d(\bm\theta_1)$ is a layer of two-qubit gates, it is straightforward to show that
\begin{align}
    O_d=U_d^{\dagger}(\bm\theta_1)\Pi_mU_d(\bm\theta_1)=\sum\limits_{P\in\{I,X,Y,Z\}^{\otimes n},~|P|\leq m}{\rm Tr}[PU_d^{\dagger}(\bm\theta_1)\Pi_mU_d(\bm\theta_1)]P/2^n.
\end{align}
Then for index $l\in\{d-1, d-2,\cdots, 1\}$, let the operator 
\begin{align}
    O_j=\frac{1}{2^n}\sum\limits_{P\in\{I,X,Y,Z\}^{\otimes n},~|P|\leq k}{\rm Tr}[U^{\dagger}_j(\bm\theta_j)O_{j-1}U_j(\bm\theta_j)P]P,
\end{align}
until the final truncated observable
\begin{align}
    O_{U(\vec{\bm\theta})}^{(k)}=U_1^{\dagger}(\bm\theta_1)O_1U_1(\bm\theta_1).
\end{align}
According to the Lemma~11 in the Ref.~\cite{angrisani2024classically}, it is shown that
\begin{align}
    \mathbb{E}_{U(\vec{\bm\theta})}\left[\abs{{\rm Tr}\left[\left(O_{U(\vec{\bm\theta})}-O_{U(\vec{\bm\theta})}^{(k)}\right)\rho\right]}\right]\leq \left(\frac{2}{3}\right)^{(k+1)/2}\|\Pi_m\|_{{\rm Pauli},2}
\end{align}
for any valid quantum state $\rho$, where the Pauli-2 norm is defined as
\begin{align}
    \|\Pi_m\|_{{\rm Pauli},2}=\left(\sum_{P}\abs{{\rm Tr}[\Pi_mP]}^2\right)^{1/2}.
\end{align}
In our case, the operator $\Pi_m$ non-trivially acts on constant number of qubits, resulting in $\|\Pi_m\|_{{\rm Pauli},2}=\mathcal{O}(1)$. Let $k=\mathcal{O}(\log(1/\epsilon\delta))$, the above result immediately yields 
\begin{align}
    \abs{{\rm Tr}\left[\left(O_{U(\vec{\bm\theta})}-O_{U(\vec{\bm\theta})}^{(k)}\right)\rho\right]}\leq\epsilon
\end{align}
for any variational quantum circuit $U(\vec{\bm\theta})$ except for a small fraction $\delta$.

Using this observation, we can further rewrite the quantum autoencoder loss function as
\begin{align}
    \mathcal{L}^{\prime}(\vec{\bm\theta})={\rm Tr}\left[O_{U(\vec{\bm\theta})}^{(k)}\rho_{\mathcal{D}}\right],
\end{align}
where the operator $O_{U(\vec{\bm\theta})}^{(k)}=\sum_{|P|\leq k}\alpha_PP$, which contains $M=\mathcal{O}(n^{k})$ local Pauli terms. By leveraging the classical shadow method~\cite{huang2020predicting}, it is shown that $N=\mathcal{O}\left(3^{\max_P\{\abs{{\rm supp}(P)}\}}\log(M)\epsilon^{-2}\right)$ samples suffice to estimate $\mathcal{L}^{\prime}(\vec{\bm\theta})$. This completes the proof.
\end{proof}

\section{Protein Language Models Used}
\label{app:plms}

Table~\ref{tab:plm_list} lists all PLMs considered in this work, along with their 
representation dimensionalities.

\begin{table}[h]
    \centering
    \begin{tabular}{ll}
        \hline
        \textbf{Model} & \textbf{Representation Dimension} \\
        \hline
        ESM-2 (t6\_8M)~\cite{lin2023evolutionary}        & 320  \\
        ESM-2 (t12\_35M)~\cite{lin2023evolutionary}        & 480  \\
        ESM-2 (t30\_150M)~\cite{lin2023evolutionary}       & 640  \\
        PeptideBERT~\cite{guntuboina2023peptidebert}           & 768  \\
        BioBERT~\cite{lee2020biobert}               & 768  \\
        Ankh-base~\cite{elnaggar2023ankh}             & 768  \\
        ProtBERT~\cite{elnaggar2021prottrans}              & 1024 \\
        ProtBERT-BFD~\cite{elnaggar2021prottrans}          & 1024 \\
        ProtT5-XL~\cite{elnaggar2021prottrans}              & 1024 \\
        ESM-2 (t33\_650M)~\cite{lin2023evolutionary}       & 1280 \\
        ESM-1b~\cite{rives2021biological}               & 1280 \\
        Ankh-large~\cite{elnaggar2023ankh}            & 1536 \\
        \hline
    \end{tabular}
    \caption{List of pre-trained PLMs used in this study.}
    \label{tab:plm_list}
\end{table}

\section{Datasets}
\subsection{Pre-Training Dataset}
\label{app:pretraindata}
\textbf{UniProt Peptide Dataset.}
We derived our dataset from the UniProt Knowledgebase (UniProtKB)~\cite{uniprot2025uniprot}, which provides a comprehensive, curated, and regularly updated repository of protein sequences from diverse species. Using the UniProt search interface, we extracted 3,062,374 unique peptide sequences with lengths between 4 and 50 amino acids. Both Swiss-Prot (reviewed) and TrEMBL (unreviewed) entries were included, ensuring coverage of a wide range of species, biological functions, and environmental contexts. The query was performed on UniProt [14 March 2024], applying sequence-length filters to restrict results to the peptidome. Redundant entries were removed, and only sequences within the defined length window were retained. This dataset thus provides a broad and representative sample of the global peptidome, which we used as the basis for unsupervised training of our quantum autoencoder models.
% \subsection{Add to paper?}
% \textcolor{red} {CHECK THIS (to be put at the right place in the manuscript): Foundation models such as ProtBERT and ESM-2 have achieved broad success at the proteome scale, while AlphaFold3 has become a reference for protein structure prediction. However, no foundation model has been specifically designed for the peptidome. Here, we present a peptide-focused foundation model, based on a quantum variational autoencoder (QVAE) and trained in an unsupervised manner on 3 million peptide sequences (4–50 amino acids) from UniProt, covering diverse species, functions, and environments. The learned representation demonstrates effective transfer to downstream tasks including pharmacological activity (antihypertensive peptides), ADME properties (blood–brain barrier permeability), and toxicology (cytotoxic activity). These results establish our model as a versatile backbone for peptide biology, complementing existing protein foundation models and enabling new applications in therapeutic biomolecule design.}

\subsection{Classification Datasets}
\label{app:downstreamdata}
\begin{table}[h]
\centering
\caption{Dataset statistics and preprocessing summary. All datasets are nearly balanced after preprocessing. Train/test splits show the actual positive and negative class counts.}
\label{tab:dataset_stats}
\begin{tabular}{l r c c}
\toprule
\textbf{Dataset} & \textbf{\#Sequences} &  \textbf{Train (Pos/Neg)} & \textbf{Test (Pos/Neg)} \\
\midrule
Antihypertensive                 & 2,778   & 1,135 / 809     & 487 / 347   \\
Antioxidant                      & 1,933   & 745 / 608       & 319 / 261   \\
Blood-Brain Barrier-Penetrating  & 650     & 248 / 207       &  107 / 88  \\
Cytotoxic                        & 2,651   & 1,017 / 838     &  436 / 360   \\
Hemo 2                           & 992     & 435 / 359       &  108 / 90  \\
Hemo 3                           & 1,587   & 692 / 577       &  173 / 145  \\
Neurotoxin                       & 1033    & 344 / 379       &  147 / 163  \\
\bottomrule
\end{tabular}
\end{table}

\paragraph{Data collection process.}
All datasets were collected from the Peptipedia v2.0 database using its query system for biological activities\cite{cabas2024peptipedia}. Peptides were selected based on the desired function and further filtered to exclude sequences containing non-canonical residues or chemical modifications. Additional constraints were applied to retain only experimentally validated peptides with a maximum length of 100 residues. For the generation of negative examples, the Peptipedia v2.0 database was also used to select peptides lacking the target activity, applying the same filtering criteria to ensure consistency across datasets\cite{cabas2024peptipedia}.

\paragraph{Dataset preparation.}
To prepare the datasets for the training process, a homology reduction step was performed using CD-Hit [14]. Each dataset was first divided into positive and negative examples. For both subsets, CD-Hit was applied with a 90\% sequence identity threshold to reduce redundancy. Subsequently, undersampling strategies were implemented to balance the datasets by adjusting the number of negative examples to match the positive ones~\cite{lin2017clustering}. To achieve this, sequences with reduced homology were randomly selected from the clusters generated by CD-Hit, ensuring a balanced representation across clusters.

\paragraph{Preprocessing.}
For each downstream dataset, we examined sequence length distributions using histograms. 
We observed that beyond a certain length threshold, all sequences were sparsely represented, 
so sequences longer than this threshold were removed (e.g., sequences longer than 65 amino acids 
in one dataset). The threshold varied across datasets and was chosen manually. After filtering, 
sequences were one-hot encoded, shorter sequences were padded to match the maximum sequence length 
within the dataset, and the resulting matrices were flattened into binary vectors for input to 
the models.

\paragraph{Antihypertensive.}
Antihypertensive peptides are short amino acid sequences that contribute to lowering blood pressure by modulating key biological and cellular processes involved in cardiovascular regulation~\cite{liang2025improving}. Their activity is often linked to the inhibition of enzymes such as angiotensin-converting enzyme (ACE) or the modulation of signaling pathways that influence vascular tone and fluid balance [11]. Beyond direct blood pressure control, these peptides may also mitigate hypertension-associated inflammation and immune responses, thereby reducing the risk of long-term cardiovascular damage~\cite{ichim2024potential}. 

\paragraph{Antioxidant.}
Antioxidant peptides are short amino acid sequences with inherent bioactivity that protect cells and tissues from oxidative damage by counteracting reactive oxygen species (ROS) and other prooxidants~\cite{zou2016structure}. Their activity arises from multiple mechanisms, including direct radical scavenging, inhibition of lipid peroxidation, and chelation of transition metals that catalyze oxidative reactions~\cite{zhang2022research}. By mitigating oxidative stress, these peptides contribute to the preservation of cellular integrity and metabolic balance, which is essential for overall health and the prevention of chronic diseases linked to oxidative damage~\cite{zhu2024antioxidant}. 

\paragraph{Blood-brain barrier-penetrating.}
Blood–brain barrier–penetrating peptides (BBBPps) are short amino acid sequences specifically designed or discovered to overcome the highly selective nature of the blood–brain barrier (BBB), a physiological shield that protects the central nervous system while severely limiting drug entry~\cite{kumar2021b3pdb}. By interacting with receptors or transport systems on the endothelial cells that form the BBB, these peptides act as molecular shuttles that enable therapeutic or diagnostic molecules to reach the brain~\cite{gu2024prediction}. Their sequences often feature amphipathic or positively charged residues that promote membrane interactions, and they can be chemically optimized to resist enzymatic degradation in circulation~\cite{kumar2021b3pdb}. 

\paragraph{Cytotoxic.}
Cytotoxic peptides are short amino acid sequences that induce cell death by directly interacting with and disrupting cell membranes, compromising their structural integrity~\cite{luan2021cytotoxic}. This membrane-targeting mechanism allows them to bypass intracellular resistance pathways and makes them particularly effective against rapidly dividing or abnormal cells, such as cancer cells~\cite{luan2021cytotoxic}. Because of their potent ability to trigger apoptosis or necrosis, cytotoxic peptides have been widely investigated as therapeutic agents in oncology, where they offer a promising alternative or complement to conventional chemotherapies~\cite{rafieezadeh2024marine}. 

\paragraph{Hemo 2 \& Hemo 3.}
HemoPI-2 and HemoPI-3 datasets are compilations of experimentally validated peptides labeled as hemolytic or non-hemolytic which have been extracted from various sources, including the Hemolytik database~\cite{gautam2014hemolytik}, Swiss-Prot~\cite{jungo2012uniprotkb} (a curated protein sequence database which is part of UniProt~\cite{uniprot2025uniprot}) and the Database of Antimicrobial Activity and Structure of Peptides (DBAASP)~\cite{pirtskhalava2016dbaasp}. Hemolytic peptides are short amino acid sequences that have the ability to disrupt red blood cell membranes and cause cell lysis, whereas non-hemolytic peptides lack such activity and are considered safer for therapeutic applications.

\paragraph{Neurotoxin.}
A neurotoxin peptide is a short amino acid sequence that exerts toxic effects on the nervous system by disrupting neuronal communication and function. These peptides typically act by binding to and modulating ion channels, receptors, or neurotransmitter release machinery, leading to altered excitability, impaired synaptic transmission, or cell death~\cite{zhou2025inherent}. Many are naturally produced by venomous organisms such as cone snails, scorpions, or spiders as a means of predation or defense, where their high potency and specificity allow them to immobilize prey or deter predators~\cite{lee2021deep}. While inherently harmful, neurotoxin peptides are also valuable molecular tools for probing the physiology of neuronal signaling and are being explored as scaffolds for therapeutic applications, where their precision in targeting neural pathways can be redirected for beneficial outcomes~\cite{yadav2021peptide}.

\end{document}